\begin{document}
\title{Best-Effort Strategies for Losing States\\
{\small Technical Report}}

\tikzstyle{p1}=[circle, draw=black, semithick]
\tikzstyle{p2}=[rectangle, draw=black, semithick]

\author{Marco Faella}
\institute{Computer Science Division\\
Physics Department\\
Universit\`a di Napoli ``Federico II'', Italy}

\maketitle

\newif
  \iflong
  \longfalse
\newif
  \ifshort
  \shorttrue

\begin{abstract}
We consider games played on finite graphs,
whose goal is to obtain a trace belonging to a given set
of winning traces.
We focus on those states from which Player~1 cannot
force a win.
We explore and compare several criteria
for establishing what is the preferable behavior 
of Player~1 from those states.

Along the way, we prove several results of theoretical
and practical interest, such as a characterization
of admissible strategies, which also provides
a simple algorithm for computing such strategies 
for various common goals, and the equivalence between
the existence of positional winning strategies and 
the existence of positional subgame perfect strategies.
\end{abstract}

\section{Introduction}

Games played on finite graphs have been widely investigated
in Computer Science, with applications including 
controller synthesis \cite{PnueliRosner89,ALW,MalerGames,emsoft05},
protocol verification~\cite{kremer01,trust07}, 
logic and automata theory~\cite{EmersonJutla91,Zielonka98}, and
compositional software verification~\cite{EMSOFT01}.

In such games, we are given a finite graph,
whose set of states is partitioned into Player~1 and Player~2 states, 
and a \emph{goal}, which is a set of infinite sequences of states.
The game consists in the two players taking turns at picking a successor state,
giving rise to an ever increasing and eventually infinite sequence of states.
A (deterministic) \emph{strategy} for a player is a function that, 
given the current history of the game (a finite sequence of states), 
chooses the next state.
A state $s$ is said to be \emph{winning} if there exists a strategy that
guarantees victory regardless of the moves of the adversary, if the game starts in $s$. 
A state that is not winning is called \emph{losing}.

The main algorithmic concern of the classical theory of these games 
is determining the set of winning states.
In this paper, we shift the focus to losing states,
since we claim that many applications would benefit from a theory of best-effort
strategies which allowed Player~1 to play in a rational way even from losing states.

For instance, many game models correspond to real-world problems
which are not really competitive: the game is just a tool which
enables to distinguish internal from external non-determinism.
In practice, the behavior of the adversary may turn out to be random, or even cooperative.
A strategy of Player~1 which does not ``give up'', but rather 
tries its best at winning, may in fact end up winning,
even starting from states that are theoretically losing.
For instance, such is the case in~\cite{emsoft05},
where games are used to model the interaction between the scheduler
of an operating system and the applications being scheduled.

In other cases, the game is an over-approximation of reality,
giving to Player~2 a wider set of capabilities (i.e., moves in the game)
than what most adversaries actually have in practice.
Again, a best-effort strategy for Player~1 can thus often lead to victory,
even against an adversary which is strictly competitive.

In this paper, we consider and compare several alternative definitions
of best-effort strategies.
As a guideline for our investigation, we take the application domain
of automated verification and synthesis of open systems.
Such domain is characterized by the fact that, once good strategies for
a game have been found, they are intended to be actually implemented
in hardware or software.
As a consequence, we tend to favor best-effort criteria that
are as discriminating (i.e., specific) as possible while still 
giving rise to efficient strategies.
While having such application domain in mind, we still put the main 
focus of the present work on theoretical issues, leaving to future
investigations a discussion of what criterion is more suitable
for any specific application.

\subsubsection{Best-effort strategies.}

The classical definition of what a ``good'' strategy is states
that a strategy is \emph{winning} if it guarantees victory
whenever the game is started in a winning state~\cite{Thomasgames95}.
This definition does not put any burden on a strategy
if the game starts from a losing state.
In other words, if the game starts from a losing state,
all strategies are considered equivalent.

A first refinement of the classical definition is a slight modification
of the game-theoretic notion of \emph{subgame-perfect equilibrium}~\cite{OsborneRubinstein}.
Cast in our framework, this notion states that a strategy is
good if it enforces victory whenever the game history is such that
victory can be enforced.
We call such strategies \emph{strongly winning}, 
to avoid confusion with the use of subgame (and subarena) which is common
in computer science~\cite{Zielonka98}.
It is easy to see that this definition captures the intuitive idea
that a good strategy should ``enforce victory whenever it can''
better than the classical one.

\begin{wrapfigure}[9]{r}{5cm}
\centering
    \begin{tikzpicture}[node distance=1.5cm, auto, bend angle=30, shorten >=2pt, shorten <=2pt]
      \node[p1] (s0)               {$s_0$};
      \node[p2] (s1) [right of=s0] {$s_1$};
      \node[p2] (s2) [right of=s1] {$s_2$};

      \path[-stealth'] (s0) edge (s1)
                            edge [bend right] (s2)
                       (s1) edge [bend right] (s0)
                            edge (s2)
                       (s2) edge [loop right] ();
  \end{tikzpicture}
   \caption{A game where victory cannot be enforced.}
  \label{fig-optimal3}
\end{wrapfigure}
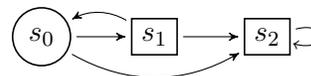

Next, consider games where victory cannot be enforced at any point
during the play. Take the B\"uchi game in Figure~\ref{fig-optimal3}
\footnote{Player~1 states are represented by circles and Player~2 states
by squares.},
whose goal is to visit infinitely often $s_0$.
No matter how many visits to $s_0$ Player~1 manages to make,
he will never reach a point where he can enforce victory.
Still, it is intuitively better for him to keep trying
(i.e., move to $s_1$) rather than give up (i.e., move to $s_2$).
To capture this intuition, we consider two further notions
of best-effort strategy. 

One such notion which is well known in the literature
is the one of optimal strategy in a particular type of game
called a \emph{Markov Decision Process} (MDP).
In such a game (also called 1.5-player game),
Player~2 plays according to a fixed distribution over successor states.
Thus, each strategy of Player~1 gives rise to a stochastic process,
i.e. to a distribution over infinite sequences of states.
Given a goal, one may then ask which is the strategy of Player~1 which
maximizes the probability of satisfying the goal.
There are known algorithms for solving this problem for various
classes of goals~\cite{bda95}.

The notion of optimal strategy suggests a first solution
to our original problem of dealing with losing states in a 2-player game.
We can assume that Player~2 plays uniformly at random and compute
an optimal strategy for Player~1.
Although this approach may be of interest to some cases,
it is worth considering alternative criteria,
which do not make as strong an assumption on the behavior of Player~2.
One such criterion is derived from the classical game-theoretic 
notion of \emph{dominance}~\cite{OsborneRubinstein}.
Given two strategies $\straa$ and $\straa'$ of Player~1,
we say that $\straa$ \emph{dominates} $\straa'$ % (written $\straa \domin \straa'$) 
if $\straa$ is always at least as good as $\straa'$, 
and better than $\straa'$ in at least one case.
Dominance induces a strict partial order over strategies,
whose maximal elements are called \emph{admissible} (or \emph{undominated}) strategies.
In Section~\ref{sec-compare}, we prove that a strategy is admissible if and only if
it is simultaneously strongly winning and cooperatively strongly winning
(i.e., strongly winning with the help of Player~2).

We claim that admissible strategies 
represent a convincing notion of best-effort strategy.
Similarly to optimality in a 1.5-player game, this notion
is goal-independent. Moreover, it does not make any assumption
on the behavior of Player~2.
Dominance can also be useful for multi-criteria optimization problems.
Differently from stochastic optimality, dominance gives rise 
to a \emph{partial} order over strategies, which leaves room for
another goal to be pursued.

\subsubsection{Memory.}

A useful measure for the complexity of a strategy consists in evaluating
how much memory it needs regarding the history of the game.
In the simplest case, a strategy requires no memory at all:
its decisions are based solely on the current state of the game.
Such strategies are called \emph{positional} or \emph{memoryless}~\cite{gimbert05}.
In other cases, a strategy may require the amount of memory that can be
provided by a finite automaton (\emph{finite memory}), or more~\cite{howmuch}.

The memory measure of a strategy is particularly important for the applications
that we target in this paper. Since we are interested in actually implementing
strategies in hardware or software, the simplest the strategy, the easiest
and most efficient it is to implement.

We devote Section~\ref{sec-memory} to studying the memory requirements for various
types of ``good'' strategies.
In particular, we prove that all goals that have positional winning strategies
also have positional \emph{strongly} winning strategies.
We also prove that for \emph{prefix-independent} positional goals, 
all positional winning strategies are automatically strongly winning.
The latter property is not valid for all positional goals. 
Thus, new algorithms are still needed to compute 
strongly winning strategies.

The situation is different for admissible strategies: there are games with
positional goals that have no positional admissible strategy.
Again, prefix-independent goals are particularly well-behaved:
those that admit positional winning strategies also admit positional
admissible strategies, as we show by presenting a simple algorithm
which computes positional admissible strategies for these goals.

\section{Definitions}
We treat games that are played by two players on a finite graph,
for an infinite number of turns.
The aim of the first player is to obtain an infinite trace
belonging to a fixed set of winning traces,
while the aim of the second player is the exact opposite.
In the literature, such games are termed
\emph{two-player}, \emph{turn-based}, 
\emph{qualitative} and \emph{zero-sum}.
The following definitions make this framework formal.

A \emph{game} is a tuple $G = (S_1, S_2, \delta, \col, F)$
such that: $S_1$ and $S_2$ are disjoint finite sets of states;
let $S = S_1 \cup S_2$, we have
that $\delta \subseteq S \times S$ is the transition relation
and $\col: S \to \nats$ is the coloring function,
where $\nats$ denotes the set of natural numbers including zero.
Finally, $F \subseteq  \nats^\omega$ is the \emph{goal}.
With an abuse of notation, we extend the coloring function 
to paths in the game graph, with the obvious meaning.
We assume that games are non-blocking, i.e. each state has at least
one successor in $\delta$.
Formally, $G' = (S_1, S_2, \delta', \col, F)$,
where $\delta' \cap (S_1 \times S) =         \delta \cap (S_1 \times S)$
and   $\delta' \cap (S_2 \times S) \subseteq \delta \cap (S_2 \times S)$.

A (finite or infinite) path in $G$ 
is a (finite or infinite) path in the directed graph $(S_1 \cup S_2, \delta)$.
If a finite path $\rho$ is a prefix of a finite or infinite path $\rho'$,
we also say that $\rho'$ \emph{extends} $\rho$.
We denote by $\first(\rho)$ the first state of a path $\rho$
and by $\last(\rho)$ the last state of a finite path $\rho$.

\subsubsection{Strategies.} 
A \emph{strategy} is a function $\stra: S^* \to S$ such that
for all $\trace \in S^*$, $(\last(\trace), \stra(\trace)) \in \delta$.
Our strategies are deterministic, 
or, in game-theoretic terms, \emph{pure}.
We denote $\Stra_G$ the set of all strategies in $G$.
We do not distinguish \emph{a priori} between strategies of Player~1 and Player~2.
However, for sake of clarity, we write $\straa$ for a strategy that should
intuitively be interpreted as belonging to Player~1,
and $\strab$ for the (rare) occasions when a strategy of Player~2 is needed.

Consider two strategies $\straa$ and $\strab$, and a finite path $\rho$,
and let $n = | \rho |$.
We denote by $\outc_G(\rho, \straa, \strab)$ the unique infinite path
$s_0 s_1 \ldots$ such that \emph{(i)} $s_0 s_1 \ldots s_{n-1} = \rho$,
and \emph{(ii)} for all $i \geq n$, $s_i = \straa(s_0 \ldots s_{i-1})$ if
$s_{i-1} \in S_1$ and $s_i = \strab(s_0 \ldots s_{i-1})$ otherwise.
We set 
$\outc_G(\rho, \straa) = \bigcup_{\strab\in\Stra_G} \outc_G(\rho, \straa, \strab)$.
For all $s \in S$ and $\rho \in \outc_G(s, \straa)$, 
we say that $\rho$ is \emph{consistent} with $\straa$. 
Similarly, we say that $\outc_G(s, \straa, \strab)$ is consistent
with $\straa$ and $\strab$.
We extend the definition of consistency from infinite paths 
to finite paths in the obvious way.

A strategy $\stra$ is \emph{positional} (or \emph{memoryless})
if $\stra(\trace)$
only depends on the last state of $\trace$.
Formally, for all $\trace,\trace' \in S^*$,
if $\last(\trace)=\last(\trace')$ then
$\stra(\trace) = \stra(\trace')$.

\subsubsection{Dominance.}
Given two strategies $\straa$ and $\strab$, and a finite path $\rho$,
we set $\val_G(\rho, \straa, \strab) = 1$ if $C(\outc_G(\rho, \straa, \strab)) \in F$,
and $\val_G(\rho, \straa, \strab) = 0$ otherwise.
Given two strategies $\straa$ and $\straa'$, 
we say that $\straa'$ \emph{dominates} $\straa$
if: \emph{(i)} for all $\strab \in \Strab_G$ and all $s\in S$, 
$\val_G(s, \straa', \strab) \geq \val_G(s, \straa, \strab)$, and
\emph{(ii)} there exist $\strab \in \Strab_G$ and $s\in S$ such that
$\val_G(s, \straa', \strab) > \val_G(s, \straa, \strab)$.

It is easy to check that dominance is an irreflexive, asymmetric
and transitive relation. 
Thus, it is a \emph{strict partial order} on strategies.

\subsubsection{Uniform stochastic games.}
In this presentation, 
a uniform stochastic game (USG) is syntactically equivalent to a game.
However, its semantics is different:
in a USG $G = (S_1,S_2,\delta,\col,F)$, 
in states in $S_1$ Player~1 chooses the successor state,
while in $S_2$ the successor is chosen according to a uniform distribution
over all successors.
A USG is thus a Markov Decision Process.
A state $s$ and a strategy $\straa$ induce a stochastic process
which is a Markov chain. 
We denote $\Prb_s^\straa(X)$ the probability of event $X$ in the Markov chain
generated by $\straa$ that starts at $s$.
For more information on stochastic processes, see~\cite{FilarVrieze97}.

Given a strategy $\straa$ and a state $s$, 
and assuming that $F$ is a measurable subset of $\nats^\omega$,
we denote $\val_G(s,\straa)$
the probability of winning using $\straa$ from state $s$,
that is $\Prb_s^\straa( \{ \rho \in S^\omega \mid \col(\rho) \in F \} )$.
We say that a strategy $\straa^*$ is \emph{optimal}
if it maximizes the probability of winning, from all states.
Formally, for all states $s$, we require
$\val_G(s,\straa^*) = \sup_{\straa\in\Straa_G} \val_G(s,\straa)$.

\subsection{Good Strategies} \label{sec-good}
In the following, unless stated otherwise,
we consider a fixed game $G = (S_1,S_2,\delta,\col,F)$ 
and we omit the $G$ subscript whenever the game is clear from the context.

Let $\rho$ be a finite path in $G$,
we say that a strategy $\straa$ is \emph{winning from $\rho$} if,
for all $\rho' \in \outc(\rho, \straa)$, the color sequence of $\rho'$
belongs to $F$.
We say that $\rho$ is \emph{winning} 
if there is a strategy $\straa$ which is winning from $\rho$.
% Note: for a path, there is a strategy that is winning from it
% iff there is a strategy that is winning from it and consistent with it
The above definition extends to states, by considering them as length-1 paths.
A state that is not winning is called \emph{losing}.

Further, a strategy $\straa$ is \emph{cooperatively winning from $\rho$} if
there exists a strategy $\strab$ such that the color sequence of 
the unique infinite path
$\outc(\rho, \straa, \strab)$ belongs to $F$.
We say that $\rho$ is \emph{cooperatively winning}
if there is a strategy $\straa$ which is cooperatively winning from $\rho$.
Intuitively, a path is cooperatively winning if the two players together
can extend that path into an infinite path that satisfies the goal.
Again, the above definitions extend to states, 
by considering them as length-1 paths.

We can now present the following set of winning criteria.
Each of them is a possible definition of what a ``good'' strategy is.
\begin{itemize}
\item A strategy is \emph{winning} if it is winning from all winning states.
This criterion intuitively demands that strategies enforce victory whenever
the initial state allows it.
\item A strategy is \emph{subgame perfect} if it is winning from all winning paths.
This criterion states that a strategy should enforce victory whenever
the current history of the game allows it.
\item A strategy is \emph{strongly winning} if it is winning 
from all winning paths that are consistent with it.
\item A strategy is \emph{cooperatively winning} 
(in short, \emph{c-winning})
if it is cooperatively winning from all cooperatively winning states.
This criterion essentially asks a strategy to be winning with the help of Player~2. 
\item A strategy is \emph{cooperatively subgame perfect} 
(in short, \emph{c-perfect})
if it is cooperatively winning from all cooperatively winning paths.
\item A strategy is \emph{cooperatively strongly winning} 
(in short, \emph{cs-winning})
if it is cooperatively winning from all cooperatively winning paths
that are consistent with it.
\item A strategy is \emph{admissible} if there is no strategy that dominates it.
This criterion favors strategies that are maximal w.r.t. the partial order 
defined by dominance.
\item A strategy is \emph{optimal} if it is so 
in the corresponding uniform stochastic game.
This criterion endorses strategies that are optimal, assuming that the adversary
chooses her moves according to a uniform distribution.
\end{itemize}

The notions of winning, optimal and cooperatively winning strategies
are customary to computer scientists~\cite{Thomasgames95,ATL-FOCS97}.
The notion of subgame perfect strategy comes from
classical game theory~\cite{OsborneRubinstein}.
The introduction of the notion of strongly winning strategy
is motivated by the fact that in the target applications 
game histories that are inconsistent with the strategy of Player~1
cannot occur. Being strongly winning is strictly weaker than being 
subgame perfect.
In addition, there are games for which there is a positional strongly winning
strategy, but no positional subgame perfect strategy
(see Figure~\ref{fig-subgame-needs-mem} in the Appendix).
The term ``strongly winning'' seems appropriate
since this notion is a natural strengthening 
of the notion of winning strategy.

We say that a goal $F$ is \emph{positional}
if for all games $G$ with goal $F$, there is a positional strategy
that is winning in $G$.
Recently, necessary and sufficient conditions for a goal to be positional
were identified~\cite{gimbert05}.

\section{Comparing Winning Criteria} \label{sec-compare}

In this section, we compare the winning criteria 
presented in Section~\ref{sec-good}, taking as the main reference
the definition of winning strategy.
Figure~\ref{fig-compare} summarizes the relationships between
the winning criteria under consideration.
We start by stating the following basic properties.
\begin{lemma} \label{lem-props}
The following properties hold:
\begin{enumerate}
\item \label{item-a} all strongly winning strategies are winning, but not vice versa;
\item \label{item-b} all subgame perfect strategies are strongly winning, but not vice versa;
\item \label{item-d} all cs-winning strategies are c-winning, but not vice versa;
\item \label{item-e} all c-perfect strategies are cs-winning, but not vice versa;
\item \label{item-c} all games have a winning 
(respectively, strongly winning, subgame perfect, c-winning,
cs-winning, c-perfect, optimal, admissible) strategy.
\end{enumerate}
\end{lemma}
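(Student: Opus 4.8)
The plan is to dispatch the four pairwise inclusions by a uniform ``shrink the domain of the quantifier'' argument, then to strictify each with a small memory-using counterexample, and finally to treat existence separately. For the inclusions (items~\ref{item-a},~\ref{item-b},~\ref{item-d},~\ref{item-e}) two observations suffice. First, a single state $s$ regarded as a length-$1$ path is consistent with \emph{every} strategy (no Player-$1$ choice has yet been committed, so $s$ is a prefix of every path in $\outc(s,\straa)$); hence requiring a strategy to be (cooperatively) winning from all (cooperatively) winning paths consistent with it entails, in particular, being (cooperatively) winning from all (cooperatively) winning states, which gives strongly winning $\Rightarrow$ winning and cs-winning $\Rightarrow$ c-winning. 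Second, the winning paths consistent with $\straa$ form a subset of all winning paths, so being (cooperatively) winning from \emph{all} (cooperatively) winning paths is stronger than being so from just the consistent ones; this gives subgame perfect $\Rightarrow$ strongly winning and c-perfect $\Rightarrow$ cs-winning.

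For strictness I would exhibit a small game and a history-dependent strategy in each case. For winning $\not\Rightarrow$ strongly winning, take a losing state owned by Player~2 with one successor inside Player~1's winning region: the finite path reaching that successor is a winning path consistent with $\straa$, yet a strategy that is correct from every winning \emph{state} is free to sabotage the continuation after that particular history. For subgame perfect $\not\Rightarrow$ strongly winning, let Player~1 choose between two winning successors; a strategy that commits to the first makes the branch through the second inconsistent with it, so it may mishandle the history after that branch while staying strongly winning. Items~\ref{item-d} and~\ref{item-e} are strictified by the cooperative analogues of these two templates; the only subtlety is that for item~\ref{item-d} the gap cannot be created by a losing first state, since a cooperatively winning path always starts at a cooperatively winning state, and must instead arise from the dependence of $\straa$ on the history after a Player-$2$ move.

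For existence (item~\ref{item-c}), a winning strategy is built by indexing on the first state: for each winning state $s$ fix some $\tau_s$ winning from $s$ (one exists by definition) and set $\straa(\trace)=\tau_{\first(\trace)}(\trace)$ when $\first(\trace)$ is winning and arbitrarily otherwise; every play from a winning state then follows a single $\tau_s$ and so wins. For strongly winning and subgame perfect strategies I would use a \emph{maintenance} strategy: on any winning path Player~1 moves to a successor that keeps the path winning, which exists by the definition of a winning path (and at a Player-$2$ node every successor keeps the path winning, again by that definition). The cooperative counterparts are the same maintenance applied in the cooperative game, where Player~1 keeps alive a joint continuation whose colour lies in $F$. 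Optimality in the induced uniform stochastic game follows from standard Markov-decision-process theory, and an admissible strategy is produced by exhibiting a single strategy that is simultaneously strongly winning and cs-winning.

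The forward inclusions and the plain winning strategy are routine; the delicate part is item~\ref{item-c}. Maintaining a winning position does not by itself guarantee a win for liveness goals such as B\"uchi, because Player~1 could postpone progress indefinitely, so the maintenance strategy must be upgraded to one that also \emph{forces} progress---this is exactly where uniform (finite-memory) winning strategies for the goals under consideration are needed, and it is the step I expect to require the most care for arbitrary $F$. The existence of optimal strategies similarly relies on $F$ being regular enough for Markov-decision-process optimal strategies to exist, while the cleanest justification that an admissible strategy exists goes through the later characterisation of admissibility as the conjunction of strongly winning and cs-winning.
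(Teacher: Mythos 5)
Your treatment of the four containments is correct and matches the paper's one-line justification: the observation that a length-$1$ path is consistent with every strategy, together with the fact that the winning paths consistent with $\straa$ form a subset of all winning paths, is exactly why items~1--4 are ``obvious by definition.'' The strictness examples you sketch are in the same spirit as the paper's Figures~\ref{fig-win-not-strongly} and~\ref{fig-cwin-not-strongly} (only note that your second heading is flipped: the example you describe witnesses strongly winning $\not\Rightarrow$ subgame perfect, which is the direction item~2 actually needs).

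The genuine gap is in item~5, and you half-diagnose it yourself. The ``maintenance'' strategy (always move to a successor that keeps the current history a winning path) does not establish the existence of subgame perfect or strongly winning strategies: for a liveness goal it can remain forever inside the winning region without ever satisfying $F$. Your proposed repair---uniform finite-memory winning strategies---is not available for arbitrary goals and is the wrong tool in any case. What works is a gluing argument: for each winning path $\rho$ fix some $\straa_\rho$ winning from $\rho$ (one exists by definition of winning path); then define $\straa^*$ by maintaining an \emph{anchor}, namely the winning path currently being served, playing $\straa_\rho$ from anchor $\rho$ onward, and re-anchoring to the current history whenever that history is a winning path but the old anchor has been invalidated by a deviation. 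Soundness rests on two facts: a strategy winning from $\rho$ is winning from every extension of $\rho$ consistent with it (so the anchor never changes while the play follows $\straa^*$), and by Lemma~\ref{lem-prefix} a non-winning history ending in a Player-1 state cannot become winning through a Player-1 move, so every winning history ends up anchored. No memory bound or uniformity enters. The same objection applies to your existence argument for admissible strategies: reducing to ``some strategy that is simultaneously strongly winning and cs-winning'' via Theorem~\ref{thm-undominated} still leaves that strategy to be constructed (again by gluing, with a winning anchor taking priority over a c-winning one), whereas the paper sidesteps this by citing Berwanger's Theorem~11. You are right, on the other hand, to flag that existence of an \emph{optimal} strategy is not ``by definition,'' since the supremum need not be attained for arbitrary measurable $F$; a complete proof must restrict $F$ or invoke the relevant result on Markov decision processes explicitly.
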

\begin{proof}
The containments stated in (\ref{item-a}) and (\ref{item-b}) are obvious by definition.
The fact that those containments are strict is easily proved 
by the game in Figure~\ref{fig-win-not-strongly} in the Appendix.
Similarly, statements (\ref{item-d}) and (\ref{item-e}) follow from the definitions
and from the example in Figure~\ref{fig-cwin-not-strongly} in the Appendix.

Regarding statement (\ref{item-c}), 
the existence of a winning (respectively, strongly winning,
subgame perfect, c-winning, cs-winning, c-perfect, optimal) strategy
is obvious by definition.
The existence of an admissible strategy can be derived
from Theorem~11 from~\cite{admissibility}.
\qed
\end{proof}

\subsection{Strongly Winning Strategies}

It is easy to check that winning strategies need not be strongly winning.
Here, we give a sufficient condition that a goal can satisfy to ensure
that all winning strategies are strongly winning.

A goal $F$ is \emph{shrinkable} iff
for all $c \rho \in F$, with $c \in \nats$ and $\rho \in \nats^\omega$,
$\rho \in F$.
A goal $F$ is \emph{extensible} iff
for all $\rho \in F$ and all $c \in \nats$, $c \rho \in F$.
A goal $F$ is \emph{prefix-independent} iff
it is both shrinkable and extensible.
Examples of common prefix-independent goals include
B\"uchi, co-B\"uchi and parity goals.

\begin{lemma} \label{lem-worsen}
If a goal $F$ is shrinkable, then, for all games with goal $F$,
all winning paths end in a winning state, and
all c-winning paths end in a c-winning state.
\end{lemma}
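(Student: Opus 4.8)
The plan is to prove both statements with a single device: shifting a strategy by a finite prefix and then discarding that prefix from the color sequence by repeated use of shrinkability.

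For the first statement, let $\rho = s_0 s_1 \cdots s_{n-1}$ be a winning path and let $\straa$ be a strategy winning from $\rho$; write $s = \last(\rho) = s_{n-1}$. I want to produce a strategy $\straa'$ that is winning from the length-$1$ path $s$. First I would define $\straa'$ so that it behaves from $s$ exactly as $\straa$ would behave had the prefix $s_0 \cdots s_{n-2}$ already been played: for every finite path $\tau = s\, t_1 \cdots t_k$ starting in $s$, set $\straa'(\tau) = \straa(s_0 \cdots s_{n-2}\,\tau)$, and define $\straa'$ arbitrarily (respecting $\delta$, which is possible since the game is non-blocking) on paths not starting in $s$.

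Next I would check the play correspondence. Prepending $s_0 \cdots s_{n-2}$ is a bijection between infinite plays $\sigma = s\, t_1 t_2 \cdots$ consistent with $\straa'$ from $s$ and infinite plays $\hat\sigma = s_0 \cdots s_{n-1}\, t_1 t_2 \cdots$ consistent with $\straa$ from $\rho$: at every position $i \ge n$ the move prescribed by $\straa'$ on the suffix is, by construction, the move prescribed by $\straa$ on the corresponding prefix of $\hat\sigma$, while positions $i < n$ follow the fixed path $\rho$ on both sides. Since $\straa$ is winning from $\rho$, every such $\hat\sigma$ satisfies $\col(\hat\sigma) \in F$. The two color sequences are related by $\col(\hat\sigma) = \col(s_0) \cdots \col(s_{n-2})\,\col(\sigma)$, so $\col(\sigma)$ is obtained from $\col(\hat\sigma) \in F$ by deleting the finite prefix $\col(s_0) \cdots \col(s_{n-2})$. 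Applying shrinkability $n-1$ times then yields $\col(\sigma) \in F$. As this holds for every play consistent with $\straa'$ from $s$, the strategy $\straa'$ is winning from $s$, so $s = \last(\rho)$ is a winning state.

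The second statement is analogous and in fact simpler, because cooperation removes the need to reason about an adversary. Unwinding the definitions, a path is c-winning exactly when it has some infinite extension whose color sequence lies in $F$ (both players can jointly follow any chosen extension). So if $\rho = s_0 \cdots s_{n-1}$ is c-winning, fix an infinite extension $\pi = s_0 \cdots s_{n-1}\, t_1 t_2 \cdots$ with $\col(\pi) \in F$. Its suffix $\pi' = s_{n-1}\, t_1 t_2 \cdots$ is an infinite extension of the length-$1$ path $\last(\rho)$, and $\col(\pi)$ is $\col(\pi')$ with the finite prefix $\col(s_0)\cdots\col(s_{n-2})$ prepended; shrinkability applied $n-1$ times gives $\col(\pi') \in F$, witnessing that $\last(\rho)$ is c-winning. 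I expect no deep obstacle: the content is entirely in the observation that shrinkability is precisely the closure property needed to throw away a finite history. The only place demanding care is the bookkeeping in the strategy shift and the verification that the prepending map really does carry $\straa'$-consistent plays to $\straa$-consistent plays; once that is pinned down, the repeated (finite) application of shrinkability—trivial when $n=1$—finishes both parts.
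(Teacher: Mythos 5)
Your proof is correct and follows essentially the same route as the paper: shift the winning strategy by the finite prefix, observe that the prepending map is a bijection between consistent plays, and discard the prefix of the color sequence by finitely many applications of shrinkability. The only difference is that you spell out the c-winning case (via the characterization of c-winning paths as those with an infinite extension in $F$), which the paper dispatches as ``along similar lines.''
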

\begin{proof}
We prove the statement for winning paths, as the one regarding
c-winning paths can be proved along similar lines.
Let $G$ be a game with a shrinkable goal $F$, 
let $\rho = s_0 \ldots s_n$ be a winning path and
let $\straa$ be a strategy which is winning from $\rho$.
Consider all infinite paths that extend $\rho$ and are consistent with
$\straa$. These paths all satisfy the goal $F$. 
If we remove the prefix $\rho$ from these paths, 
they still all satisfy $F$, since $F$ is shrinkable.
Consider the strategy $\straa'$ defined by:
for all $\pi \in S^*$,
$$
\straa'(\pi) = 
\begin{cases}
\straa(s_0 \ldots s_{n-1} \pi) &\text{if $\first(\pi) = s_n$,} \\
\text{arbitrarily defined}     &\text{otherwise.}
\end{cases}
$$
It is immediate that $\straa'$ is winning from $s_n$ and therefore
$s_n$ is a winning state.
\qed
\end{proof}
The following corollary states that if a goal is shrinkable,
winning strategies confine the game in the winning region.

\begin{corollary} \label{lem-trapping}
If a goal is shrinkable, for all winning strategies $\straa$,
and for all finite paths $\rho$ consistent with $\straa$,
if $\first(\rho)$ is winning then $\last(\rho)$ is winning.
\end{corollary}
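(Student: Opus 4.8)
The plan is to reduce the claim to Lemma~\ref{lem-worsen}. The key observation is that a finite path $\rho$ that is consistent with a winning strategy and starts in a winning state is itself a winning path; once this is established, Lemma~\ref{lem-worsen} immediately yields that $\last(\rho)$ is a winning state, since $F$ is shrinkable.

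To establish that $\rho$ is winning, I would fix a winning strategy $\straa$ with $\first(\rho)$ winning. Because $\straa$ is a winning strategy, by definition it is winning from every winning state, in particular from the length-1 path $\first(\rho)$; hence every infinite path in $\outc(\first(\rho), \straa)$ has its color sequence in $F$. I would then argue that $\outc(\rho, \straa) \subseteq \outc(\first(\rho), \straa)$: since every Player~1 choice already made along $\rho$ agrees with $\straa$ (this is exactly what consistency means), any infinite extension of $\rho$ that follows $\straa$ at Player~1 states is also a continuation of $\first(\rho)$ that follows $\straa$ throughout. Consequently every path in $\outc(\rho, \straa)$ satisfies $F$, so $\straa$ is winning from $\rho$, and therefore $\rho$ is a winning path.

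Applying Lemma~\ref{lem-worsen} to the winning path $\rho$ then gives that $\last(\rho)$ is a winning state, as desired. I do not foresee a serious obstacle: the only step requiring care is the inclusion $\outc(\rho, \straa) \subseteq \outc(\first(\rho), \straa)$, where one must unwind the definitions of consistency and of $\outc(\cdot, \straa)$ to confirm that no continuation of $\rho$ consistent with $\straa$ escapes the set of continuations of $\first(\rho)$ consistent with $\straa$. Everything else is an immediate appeal to the definition of a winning strategy and to the already-proven lemma.
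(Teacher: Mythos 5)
Your proposal is correct and follows exactly the route the paper intends: the corollary is stated without proof precisely because, once one observes that a finite path consistent with a winning strategy and starting in a winning state is itself a winning path (via the inclusion $\outc(\rho,\straa) \subseteq \outc(\first(\rho),\straa)$ you verify), Lemma~\ref{lem-worsen} immediately gives that $\last(\rho)$ is winning. No gaps.
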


\begin{theorem} \label{thm-pindependent}
If a goal $F$ is prefix-independent, then, for all games with goal $F$,
all positional winning strategies are strongly winning,
and all positional c-winning strategies are cs-winning.
\end{theorem}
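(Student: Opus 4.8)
The plan is to prove both claims with the same three-step recipe, exploiting the two halves of prefix-independence separately: shrinkability to pass from a path to its endpoint, and extensibility to pass back. Throughout I fix a positional strategy $\straa$ and write a typical (c\nobreakdash-)winning path as $\rho = s_0 \ldots s_k$. The common skeleton is: \emph{(i)} use shrinkability, via Lemma~\ref{lem-worsen}, to conclude that $\last(\rho) = s_k$ is itself a (c\nobreakdash-)winning \emph{state}; \emph{(ii)} use the hypothesis that $\straa$ is (c\nobreakdash-)winning to conclude that $\straa$ is (cooperatively) winning from the length-$1$ path $s_k$; \emph{(iii)} use positionality to transplant the good play(s) available from $s_k$ onto the tail of $\rho$, and then use extensibility to reabsorb the finite color prefix $\col(s_0 \ldots s_{k-1})$ so that the full color sequence still lies in $F$. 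I note in passing that this argument never actually uses the consistency of $\rho$ with $\straa$, so it establishes the stronger properties of being subgame perfect and c-perfect; but only strongly winning and cs-winning are required here.

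For the winning case, let $\straa$ be positional and winning and let $\rho = s_0 \ldots s_k$ be a winning path consistent with $\straa$. By step~\emph{(i)}, $s_k$ is winning, and since $\straa$ is winning it is winning from $s_k$, i.e. every $\pi \in \outc(s_k, \straa)$ satisfies $\col(\pi) \in F$. The key observation is that, because $\straa$ is positional, every infinite path consistent with $\straa$ that extends $\rho$ has the form $s_0 \ldots s_{k-1} \cdot \pi$ with $\pi \in \outc(s_k, \straa)$: at each Player~1 position in the tail the move prescribed by $\straa$ depends only on the current state, so the suffix starting at $s_k$ is itself consistent with $\straa$ from $s_k$. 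The color of such a path is $\col(s_0) \ldots \col(s_{k-1})$ followed by $\col(\pi) \in F$; applying extensibility $k$ times, once for each of the colors $\col(s_0), \ldots, \col(s_{k-1})$, shows that the whole color sequence belongs to $F$. Hence $\straa$ is winning from $\rho$, and since $\rho$ was an arbitrary consistent winning path, $\straa$ is strongly winning.

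The cooperative case is symmetric, with ``all plays win'' replaced by ``some cooperative play wins''. Let $\straa$ be positional and c-winning and let $\rho = s_0 \ldots s_k$ be a c-winning path consistent with $\straa$. By steps~\emph{(i)}--\emph{(ii)}, $s_k$ is c-winning and $\straa$ is cooperatively winning from $s_k$, so there is a $\strab$ with $\col(\outc(s_k, \straa, \strab)) \in F$. I would then build a Player~2 strategy $\strab'$ that, on any history extending $\rho$, replays the choice made by $\strab$ on that history after deleting its length-$k$ prefix $s_0 \ldots s_{k-1}$, and is defined arbitrarily elsewhere. Using positionality of $\straa$ to match Player~1's choices position by position, a straightforward induction gives $\outc(\rho, \straa, \strab') = s_0 \ldots s_{k-1} \cdot \outc(s_k, \straa, \strab)$, whose color sequence lies in $F$ by extensibility exactly as above. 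Thus $\straa$ is cooperatively winning from $\rho$, hence cs-winning.

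The hard part is step~\emph{(iii)}, the positionality ``transplant''. Concretely, the work is an induction showing that the continuation of $\rho$ produced by $\straa$ (and, in the cooperative case, by $\strab'$) coincides state-by-state with the play from $s_k$; this is exactly where memorylessness is essential, since it guarantees that $\straa$ reacts identically whether it sees the bare state $s_k$ or the longer history $s_0 \ldots s_k$. Defining $\strab'$ cleanly as a genuine function of the whole history, so that the prefix-deletion reindexing is well-posed, is the only mildly fiddly point. Everything else reduces to direct appeals to the definitions, to Lemma~\ref{lem-worsen} for shrinkability, and to a finite iteration of extensibility.
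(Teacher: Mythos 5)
Your proof is correct and follows essentially the same route as the paper's: Lemma~\ref{lem-worsen} (shrinkability) to get that $\last(\rho)$ is a (c-)winning state, positionality to identify the tail of any consistent extension with a play from $\last(\rho)$, and extensibility to reabsorb the finite prefix; your explicit construction of $\strab'$ just spells out the cooperative case that the paper dismisses as ``along similar lines,'' and your side remark that consistency of $\rho$ with $\straa$ is never used (so the argument in fact yields subgame perfection) is also correct and consistent with the paper, since the counterexample in Figure~\ref{fig-subgame-needs-mem} uses a goal that is not prefix-independent.
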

\begin{proof}
We prove the statement for positional winning strategies, 
as the one regarding positional c-winning strategies 
can be proved along similar lines.
Let $G$ be a game with goal $F$, and let $\straa$ be a winning strategy for $G$.
Let $\rho = s_0 \ldots s_n$ be a winning path which is consistent with $\straa$.
By Lemma~\ref{lem-worsen}, $s_n$ is a winning state.
Let $\rho' = s_0 \ldots s_n s_{n+1} \ldots$ 
be an infinite path which extends $\rho$ and is consistent with $\straa$.
Since $\straa$ is positional and winning from $s_n$, 
$\col(s_{n} s_{n+1} \ldots) \in F$.
Since $F$ is extensible, $\col(\rho') \in F$. 
Therefore, $\straa$ is winning from $\rho$.
Being $\rho$ generic, we conclude that $\straa$ is strongly winning.
\qed
\end{proof}
The positionality assumption is necessary in the above result.
For a prefix-independent goal, it is easy to devise
winning strategies that are not positional and not strongly winning.
On the other hand, being prefix-independent is not necessary for ensuring that 
all positional winning strategies are strongly winning. 
For instance, safety and reachability goals are not prefix-independent,
but they ensure said property.
Finally, simple examples show that 
neither shrinkability nor extensibility alone can replace prefix-independence
in the assumptions of the above result.

\begin{figure}[htp]
   \centering
   \begin{tikzpicture}[scale=0.70]
     \filldraw[fill=gray!15,very thick]  
        (1.5,0) arc (0:60:3cm and 1.5cm) 
                arc (120:240:3cm and 1.5cm)
                arc (300:360:3cm and 1.5cm);     
     \draw   (-2,0) ellipse (4cm and 2cm);
     \draw (-1.5,0) ellipse (3cm and 1.5cm);
     \draw   (-1,0) ellipse (2cm and 1cm);
     \draw    (2,0) ellipse (4cm and 2cm);
     \draw  (1.5,0) ellipse (3cm and 1.5cm);
     \draw    (1,0) ellipse (2cm and 1cm);
     \draw    (0,2) ellipse (1cm and 2.5cm);

     \draw    (6,0) arc (0:30:4cm and 2cm)   -- ++(30:0.3cm) node[right] {C-Winning};
     \draw  (4.5,0) arc (0:40:3cm and 1.5cm) -- ++(40:1.8cm) node[right] {CS-Winning};
     \draw    (3,0) arc (0:60:2cm and 1cm)   -- ++(60:2.5cm) node[right] {C-Perfect};

     \draw   (-6,0) arc (180:150:4cm and 2cm)   -- ++(150:0.3cm) node[left] {Winning};
     \draw (-4.5,0) arc (180:140:3cm and 1.5cm) -- ++(140:1.8cm) node[left] {Strongly Winning};
     \draw   (-3,0) arc (180:120:2cm and 1cm)   -- ++(120:2.5cm) node[left] {Subgame Perfect};

     \draw   (0,0) ++(90:4.5cm) -- ++(90:0.3cm) node[above] {Optimal};
     \path   (0,0) node {\bf Admissible};
     \end{tikzpicture}
   \caption{Comparing winning criteria.}
  \label{fig-compare}
\end{figure}
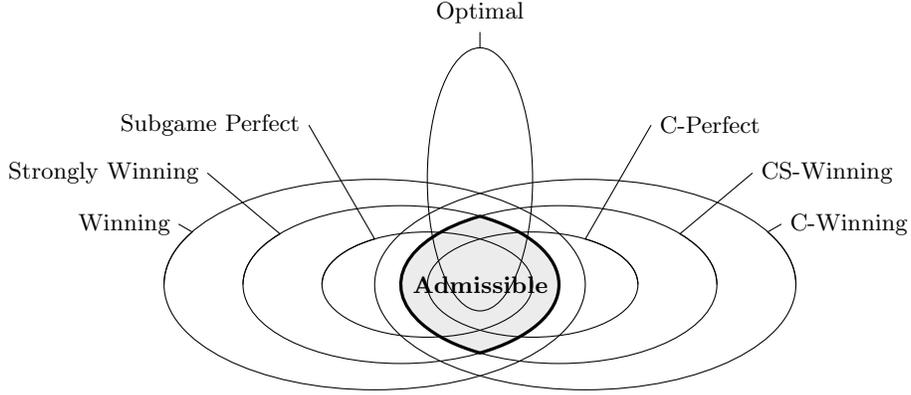

\subsection{Admissible Strategies}

In this section we provide a characterization of admissibility
in terms of the simpler criteria of strongly winning 
and cooperatively strongly winning.
Such characterization will be useful to derive further properties
of admissible strategies.
We start with a trivial property of winning paths.

\begin{lemma} \label{lem-prefix}
If $s_0 s_1 \ldots s_n$ is a winning path and $s_{n-1} \in S_1$,
then $s_0 s_1 \ldots s_{n-1}$ is a winning path.
\end{lemma}
We can now prove the main theorem of this section.

\begin{theorem} \label{thm-undominated}
A strategy is admissible
if and only if it is strongly winning and 
cooperatively strongly winning.
\end{theorem}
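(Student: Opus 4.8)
The plan is to prove both implications separately, using only the dominance conditions (i) and (ii) from the definition together with Lemma~\ref{lem-prefix}. For the forward direction (admissible $\Rightarrow$ strongly winning and cs-winning) I would argue the contrapositive, exhibiting a dominating strategy whenever one of the two properties fails. For the converse (strongly winning and cs-winning $\Rightarrow$ admissible) I would argue by contradiction, assuming a dominating strategy $\straa'$ exists and analysing the first point at which it parts ways with $\straa$.

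For the contrapositive of the forward direction, suppose first that $\straa$ is not strongly winning: there is a winning path $\rho$ consistent with $\straa$ from which $\straa$ is not winning, so $\val(\rho,\straa,\strab)=0$ for some $\strab$. Let $\straa''$ be a strategy winning from $\rho$ (which exists since $\rho$ is winning), and define $\straa'$ to follow $\straa''$ on every history extending $\rho$ and to follow $\straa$ otherwise. I would check condition (i) by noting that a play from a state $s$ behaves identically under $\straa$ and $\straa'$ until its history reaches $\rho$ (which forces $s=\first(\rho)$); from that moment $\straa'$ plays a winning strategy, so $\val(s,\straa',\strab)\geq\val(s,\straa,\strab)$ always. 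For condition (ii) I would build an adversary that drives the play along $\rho$ and then plays the bad $\strab$: this makes $\straa$ lose and $\straa'$ win, a strict improvement. The case where $\straa$ is not cs-winning is handled analogously, splicing a cooperative witness after the offending path; there every play through $\rho$ already loses under $\straa$, so condition (i) is immediate.

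For the converse, assume $\straa$ is strongly winning and cs-winning yet dominated by $\straa'$, and pick $s,\strab$ realising condition (ii), so that $\pi=\outc(s,\straa',\strab)$ has $\col(\pi)\in F$ while $\pi'=\outc(s,\straa,\strab)$ has $\col(\pi')\notin F$. Since both plays use $\strab$ for Player~2, they share a longest common prefix $\rho$ and first diverge at a Player~1 state $s_k=\last(\rho)$, where $\straa$ selects some $t$ and $\straa'$ selects some $t'\neq t$. As $\pi$ exhibits a cooperative win continuing $\rho$, the path $\rho$ is cooperatively winning, and it is consistent with $\straa$. I would then split on whether $\rho$ is winning. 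If it is, strong winningness forces $\outc(\rho,\straa,\strab)=\pi'$ to satisfy $F$, contradicting $\col(\pi')\notin F$. If $\rho$ is not winning, cs-winningness yields a $\strab_1$ with $\val(\rho,\straa,\strab_1)=1$, a win necessarily passing through $t$; meanwhile Lemma~\ref{lem-prefix} applied to $\rho\,t'$ shows $\rho\,t'$ cannot be winning (else $\rho$ would be), so some $\strab_2$ beats $\straa'$ along $t'$. Because $t\neq t'$, a single memory-using adversary $\strab^*$ can follow $\rho$ and then mimic $\strab_1$ after observing $t$ and $\strab_2$ after observing $t'$; this yields $\val(s,\straa,\strab^*)=1$ but $\val(s,\straa',\strab^*)=0$, contradicting condition (i).

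I expect the second case of the converse to be the main obstacle. The delicate part is combining cs-winningness (to force a win through $\straa$'s own move $t$) with Lemma~\ref{lem-prefix} (to guarantee that $\straa'$'s move $t'$ can be punished), and then fusing the two adversaries into one coherent $\strab^*$. This hinges on the divergence occurring at a Player~1 state and on adversaries being permitted memory, so that $\strab^*$ can respond differently to $t$ and $t'$; anchoring the entire comparison at the common start state $s$ is what keeps condition (i) directly applicable despite the strategies being history-dependent.
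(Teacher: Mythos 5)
Your proposal is correct and follows essentially the same route as the paper's proof: for one direction you splice a winning (resp.\ cooperative) continuation onto the offending path to build a dominating strategy, and for the other you locate the first divergence point (necessarily a Player~1 state), use strong winningness to rule out $\rho$ being winning, and then fuse two adversaries via Lemma~\ref{lem-prefix} and cs-winningness into a single $\strab^*$ that refutes condition~(i) --- exactly the paper's construction of $\strab'$ from the common prefix $\alpha u$. The only cosmetic difference is that you phrase the ``$\rho$ winning'' situation as an explicit case that yields an immediate contradiction, where the paper directly derives that the common prefix is not a winning path.
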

\begin{proof}
For the ``if'' part, let $\straa^*$ be a strategy
which is both strongly winning and cs-winning. 
Assume that there is 
a strategy $\straa$ that is better than 
$\straa^*$ in at least one case.
In particular, let $s$ be a state and $\strab$ be a strategy of Player~2 such that
$\val(s, \straa^*, \strab) = 0$ and $\val(s, \straa, \strab) = 1$.
We can build a Player~2 strategy $\strab'$ such that
$\val(s, \straa^*, \strab') = 1$ and $\val(s, \straa, \strab') = 0$,
thus proving that $\straa$ does not dominate $\straa^*$.

Let $\alpha u$ be the longest prefix common to both $\outc(s, \straa, \strab)$
and $\outc(s, \straa^*, \strab)$.
Precisely,
let $\outc(s, \straa,   \strab) = \alpha u v   \ldots$
and $\outc(s, \straa^*, \strab) = \alpha u v^* \ldots$,
where $\alpha \in S^*$, $u,v,v^* \in S$, and $v \neq v^*$.
Clearly, it must be $u \in S_1$.
We have that $\alpha u$ is not a winning path because $\straa^*$ is strongly winning
and still $\val(s, \straa^*, \strab) = 0$.
By Lemma~\ref{lem-prefix}, $\alpha u v$ is not a winning path either.
We also have that $\alpha u$ is c-winning, because $\val(s, \straa, \strab) = 1$.
Since $\straa^*$ is cs-winning, $\alpha u v^*$ is also c-winning.

Then, we define the Player~2 strategy $\strab'$ as follows. 
We let $\strab'$ coincide with $\strab$ on all finite paths
that are prefixes of $\alpha u$.
On all paths that are extensions of $\alpha u v$, 
we let $\strab'$ behave in such a way to ensure that $\val(s, \straa, \strab') = 0$.
This is possible because $\alpha u v$ is not a winning path.
On all paths that are extensions of $\alpha u v^*$, 
we let $\strab'$ behave in such a way to ensure that $\val(s, \straa^*, \strab') = 1$.
This is possible because $\alpha u v^*$ is a c-winning path
and $\straa^*$ is cs-winning.
We conclude that $\straa$ does not dominate $\straa^*$.
Therefore, no strategy dominates $\straa^*$.

Next, we prove the ``only if'' part.
Let $\straa^*$ be an admissible strategy.
By contradiction, assume that $\straa^*$ is not strongly winning.
Therefore, there exist a winning path $\rho$ and a strategy $\strab^*$
such that $\rho$ is consistent with $\straa^*$ and $\strab^*$
and $\val(\rho,\straa^*,\strab^*) = 0$.
Now, let $\straa$ be a strategy that is winning from $\rho$.
Define another Player~1 strategy $\straa'$ as follows,
for all $\pi \in S^*$:
$$
\straa'(\pi) =
\begin{cases}
\straa(\pi)   &\text{if $\pi$ extends $\rho$,} \\
\straa^*(\pi) &\text{otherwise.}
\end{cases}
$$
We show that $\straa'$ dominates $\straa^*$, which is a contradiction.
Take any Player~2 strategy $\strab$.
If $\straa'$ and $\strab$ together do not give rise to the finite path $\rho$,
$\straa'$ behaves exactly like $\straa^*$.
If $\straa'$ and $\strab$ together do give rise to the path $\rho$,
from that point on $\straa'$ behaves like $\straa$, and therefore ensures victory.
This proves that $\straa'$ always performs at least as well as $\straa^*$.
Finally, there is a case where $\straa'$ performs better than $\straa^*$: 
we have $\val(\first(\rho),\straa', \strab^*) = 1$ and 
$\val(\first(\rho),\straa^*, \strab^*) = 0$,
which concludes the proof of the contradiction.

Next, we show that $\straa^*$ is cs-winning.
Let $\rho$ be a c-winning path and assume by contradiction that,
for all strategies $\strab$, $\val(\rho, \straa^*, \strab) = 0$.
Let $\straa^\bullet, \strab^\bullet$ be a pair of strategies such that
$\val(\rho, \straa^\bullet, \strab^\bullet) = 1$.
Let $\straa$ be a strategy that behaves like $\straa^*$,
except that for all paths extending $\rho$ it behaves like $\straa^\bullet$.
We prove that $\straa$ dominates $\straa^*$.
If the path $\rho$ is not formed during the game, 
$\straa$ behaves exactly like $\straa^*$.
If the path $\rho$ is formed, $\straa^*$ loses with certainty,
while $\straa$ wins in at least one case, namely against $\strab^\bullet$.
This proves that $\straa$ dominates $\straa^*$, which is a contradiction.
\qed
\end{proof}

\subsection{Optimal Strategies} \label{sec-compare-optimal}

A natural notion of good strategy in a losing
state is given by optimal strategies in the corresponding
uniform stochastic game (USG).
In this section, we explore the relationship between optimality
and the other winning criteria.

First of all, a trivial example (see Figure~\ref{fig-optimal1} in the Appendix)
shows that neither winning strategies
nor strongly winning strategies nor even admissible strategies need be optimal.
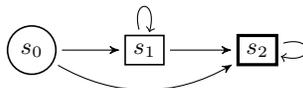
\begin{figure}[htp]
\centering
    \begin{tikzpicture}[node distance=1.5cm, auto, bend angle=30, shorten >=2pt, shorten <=2pt]
      \node[p1] (s0)               {$s_0$};
      \node[p2] (s1) [right of=s0] {$s_1$};
      \node[p2, very thick] (s2) [right of=s1] {$s_2$};

      \path[-stealth'] (s0) edge (s1)
                            edge [bend right] (s2)
                       (s1) edge [loop above] ()
                            edge (s2)
                       (s2) edge [loop right] ();
  \end{tikzpicture}
  \caption{A game showing that optimal strategies need not be winning.}
  \label{fig-optimal2}
\end{figure}
The game in Figure~\ref{fig-optimal2} shows that the converse
also holds, i.e. optimal strategies need not be winning (and therefore
neither strongly winning nor admissible).
Suppose that the goal is to reach state $s_2$.
In the corresponding USG, both strategies of Player~1 are optimal,
yielding a probability 1 of winning.
However, only the strategy that goes directly to $s_2$ is winning.
So, in this case winning proves to be finer (more discriminating)
than optimal.

Finally, one might wonder whether 
starting from losing states optimal strategies are always admissible.
The game in Figure~\ref{fig-optimal3},
already presented in the introduction, settles this question in the negative. 
Suppose the goal is to visit infinitely often $s_0$.
All states and all paths in the game are losing.
In the corresponding USG, all strategies have value 0.
However, the strategy that always picks $s_1$ dominates
all others, since it has a chance of winning (namely,
against the strategy that always picks $s_0$).

\section{Memory} \label{sec-memory}

In this section, we study the amount of memory required by
``good'' strategies for achieving different kinds of goals.
We are particularly interested in identifying those goals 
which admit positional good strategies, because positional
strategies are the easiest to implement.
The problem of identifying goals
that admit positional winning strategies and positional optimal strategies
is well studied in the literature.
Here, we focus on the winning criteria of
``strongly winning'' and ``admissible''.

\subsection{Positional Strongly Winning Strategies}

For a game $G = (S_1, S_2, \delta, \col, F)$ 
and a path $\rho = s_0 \ldots s_n$ in $G$,
define $\detach(G,\rho)$ as the game obtained from $G$
by adding a copy of the path $\rho$ to it as a chain of new states
ending in the original state $s_n$.
Formally, $\detach(G,\rho) = (S_1,S_2',\delta',\col', F)$,
where $S_2' = S_2 \cup \{ s_0', s_1', \ldots, s_{n-1}' \}$
and $s_0', s_1', \ldots, s_{n-1}'$ are new distinct states not belonging to $S_2$ nor to $S_1$.
Then, $(s,t) \in \delta'$ iff either 
\emph{(i)} $(s,t) \in \delta$, or
\emph{(ii)} $s=s_i'$ and $t=s_{i+1}'$, or 
\emph{(iii)} $s=s_{n-1}'$ and $t=s_n$.
Finally, the color labeling is defined by: 
$$\col'(s) = \begin{cases}
  \col(s_i) &\text{if $s=s_i'$ for some $i \in \{0,\ldots,n-1\}$,} \\
  \col(s)   &\text{otherwise.}
\end{cases}
$$
The detach operation adds some states to the game, but the new
states are not reachable from any of the old states.
As a consequence, if games $G$ and $\detach(G,\rho)$ start from an old state $s$,
they are indistinguishable to both players.
The following lemma formalizes this observation, 
by stating in particular that the detach operation preserves the winning property of paths.

Notice that in the following we commit a slight abuse in language 
by identifying positional strategies in $G$ and in $\detach(G,\rho)$.
This is justified by the fact that the new states introduced in $\detach(G,\rho)$
only have one successor, thus giving no more choices to either player.

\begin{lemma} \label{lem-detach1}
For all strategies $\straa$, and finite paths $\rho, \rho'$ in $G$,
$\straa$ is winning from $\rho$ in $G$ if and only if
 it is winning from $\rho$ in $\detach(G,\rho')$.
\end{lemma}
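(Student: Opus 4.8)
The plan is to prove the two directions of the biconditional by exploiting the defining feature of the $\detach$ operation: the new states $s_0', \ldots, s_{n-1}'$ introduced in $\detach(G,\rho')$ are not reachable from any state of the original game $G$. First I would make precise the earlier informal remark that, when the game starts from an old state $s$, the two games $G$ and $\detach(G,\rho')$ are ``indistinguishable to both players.'' Concretely, for any finite or infinite path $\pi$ in $G$ that begins in an old state, every successor of every state along $\pi$ is again an old state (since no old state has a transition into a primed state), so $\pi$ is also a legal path in $\detach(G,\rho')$; conversely, any path in $\detach(G,\rho')$ that starts from an old state never visits a primed state and hence is a path of $G$. Moreover, by construction $\col'$ agrees with $\col$ on all old states, so the color sequence of such a $\pi$ is identical in both games.

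Next I would connect this to the outcome sets $\outc(\rho,\straa)$. Since $\rho$ is a path in $G$, it begins in an old state; because neither player has any genuine choice at a primed state (each has a unique successor), playing $\straa$ from $\rho$ in $\detach(G,\rho')$ can never wander into the primed chain. The key observation is therefore that the set of infinite paths consistent with $\straa$ and extending $\rho$ is literally the same set in $G$ and in $\detach(G,\rho')$, namely
$$
\outc_G(\rho,\straa) = \outc_{\detach(G,\rho')}(\rho,\straa),
$$
under the identification of positional strategies noted in the excerpt. Here I would be slightly careful about arbitrary (non-positional) strategies, since a strategy is a function on all of $S^*$ and the state space differs between the two games; but because every relevant history stays within the old states, only the restriction of $\straa$ to histories over old states matters, and that restriction is shared.

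Once the outcome sets are shown to coincide, the lemma is immediate: $\straa$ is winning from $\rho$ in a game exactly when $\col$ maps every element of the corresponding outcome set into $F$, and since both the outcome sets and the coloring coincide, the two ``winning from $\rho$'' conditions are equivalent. I expect the only real subtlety, rather than a genuine obstacle, to be the bookkeeping around strategies as functions on $S^*$ versus $(S')^*$: I would handle this by observing that histories reachable while playing from $\rho$ are confined to old states, so the ambiguity in extending or restricting $\straa$ across the two state spaces is irrelevant to the outcome. The non-reachability of the primed states is what makes every step go through cleanly.
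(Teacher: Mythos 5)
The paper gives no explicit proof of this lemma: it relies entirely on the observation, stated just before the lemma, that the primed states are unreachable from old states and hence the two games are indistinguishable when started from an old state. Your proposal is a correct and careful elaboration of exactly that argument (including the sensible handling of the strategy-domain mismatch between $S^*$ and $(S')^*$), so it matches the paper's intended approach.
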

The following lemma provides the very reason for introducing the detach operation
in the first place. Consider a winning path in the original game $G$.
In order for this path to occur in a play, we may need the collaboration of Player~2. 
On the other hand,
detaching this path allows us to force its occurrence, with no help from Player~2.
As a consequence, the initial state $s_0'$ of the detached path is a winning state
in the detached game, as stated by the following lemma.

\begin{lemma} \label{lem-detach2}
Let $\rho = s_0 \ldots s_n$ be a path in $G$
and let $G' = \detach(G,\rho)$.
Let $s_0'$ be the new state added to $G'$ in correspondence to $s_0$.
If $\rho$ is a winning path in $G$, 
then $s_0'$ is a winning state in $G'$.
\end{lemma}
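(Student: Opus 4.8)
The plan is to prove that $s_0'$ is a winning state in $G' = \detach(G,\rho)$ by exhibiting an explicit winning strategy from $s_0'$. The key observation is that the detached chain $s_0', s_1', \ldots, s_{n-1}'$ is a deterministic corridor: each of these new states has exactly one successor, so once the play starts at $s_0'$, it is \emph{forced} to proceed $s_0' s_1' \cdots s_{n-1}' s_n$ regardless of either player's choices. In particular, Player~2 has no opportunity to deviate along this prefix, which is precisely the feature that the detach operation was designed to provide. After the forced prefix, the play arrives at the original state $s_n$, where the game behaves exactly like $G$.

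First I would use the hypothesis that $\rho$ is a winning path in $G$: by definition there is a strategy $\straa$ that is winning from $\rho$ in $G$. Since $s_n = \last(\rho)$, and by the abuse of language identifying positional strategies (and more generally strategies that agree on the original states) across $G$ and $G'$, I would transport $\straa$ to a strategy $\straa'$ on $G'$. Concretely, $\straa'$ plays arbitrarily on the forced new states (they belong to $S_2'$ and have a unique successor anyway, so no choice matters), and on any path that has traversed the corridor and reached $s_n$, it mimics the behavior that $\straa$ prescribes on the corresponding path in $G$ obtained by replacing the primed prefix $s_0' \cdots s_{n-1}'$ with the original prefix $s_0 \cdots s_{n-1}$.

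Next I would verify the winning condition from $s_0'$. Take any infinite path $\pi'$ in $G'$ consistent with $\straa'$ and starting at $s_0'$. Because the corridor is forced, $\pi'$ has the form $s_0' s_1' \cdots s_{n-1}' s_n t_{n+1} t_{n+2} \cdots$, where the suffix $s_n t_{n+1} \cdots$ is an infinite path in $G$ consistent with $\straa$ and extending $\rho$ (here I use Lemma~\ref{lem-detach1} to ensure that the continuation past $s_n$ is governed by the same game dynamics as in $G$). Since $\straa$ is winning from $\rho$, the color sequence of $s_0 s_1 \cdots s_{n-1} s_n t_{n+1} \cdots$ lies in $F$. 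The crucial point is that $\col'(s_i') = \col(s_i)$ by construction, so the primed prefix contributes exactly the same colors as the original prefix; hence $\col'(\pi')$ equals $\col$ of the corresponding path in $G$, which is in $F$. Therefore every $\straa'$-consistent extension of $s_0'$ satisfies the goal, so $\straa'$ is winning from $s_0'$ and $s_0'$ is a winning state in $G'$.

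The main obstacle I anticipate is purely bookkeeping rather than conceptual: carefully setting up the correspondence between paths in $G'$ that pass through the detached chain and paths in $G$ that start with $\rho$, and checking that $\straa'$ is well-defined and consistent across this translation. The substantive content all rests on two facts already established or built into the definitions — that the new states form a forced single-successor corridor (so Player~2 cannot escape), and that the detach operation copies colors faithfully ($\col'(s_i') = \col(s_i)$) — which together guarantee that winning from $\rho$ in $G$ transfers to winning from $s_0'$ in $G'$.
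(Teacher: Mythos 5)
Your proposal is correct and follows essentially the same route as the paper's proof: transport the strategy $\straa$ that is winning from $\rho$ to a strategy $\straa'$ on $G'$ by substituting the detached prefix $s_0'\ldots s_{n-1}'$ for $s_0\ldots s_{n-1}$, and observe that the resulting consistent plays from $s_0'$ are color-equivalent to the $\straa$-consistent extensions of $\rho$, hence satisfy $F$. The only cosmetic difference is your explicit appeal to Lemma~\ref{lem-detach1} for the continuation past $s_n$, which the paper's proof does not need to invoke.
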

\begin{proof}
Let $\straa$ be a strategy that is winning from $\rho$ in $G$,
and let $\rho' = s_0' \ldots s_{n-1}' s_n$ be the detachment of $\rho$
added to $G'$.
Define a strategy $\straa'$ in $G'$ as follows.
For all finite paths $\pi$ in $G'$, if $\pi$ extends $\rho'$,
i.e. $\pi = \rho' \pi'$, set $\straa'(\pi) = \straa( \rho \pi')$.
For all other paths in $G'$, $\straa'$ chooses an arbitrary move
available to Player~1.
It is easy to check that the set of infinite paths in $G$ 
which are consistent with $\straa$ and extend $\rho$
is color-equivalent to the set of infinite paths in $G'$
which are consistent with $\straa'$ and start in $s_0'$.
Therefore, $\straa'$ is winning from $s_0'$ in $G'$.
\qed
\end{proof}
With the help of the previous lemmas, we are ready to prove the following.

\begin{theorem}
A goal is positional if and only if it admits
positional subgame perfect strategies.
\end{theorem}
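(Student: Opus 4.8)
The plan is to prove the two implications separately, putting essentially all the work into the ``only if'' direction. The ``if'' direction is immediate: every subgame perfect strategy is winning from all winning paths, and in particular from all winning paths of length one, i.e. from all winning states, so every subgame perfect strategy is winning. Hence, if every game with goal $F$ admits a positional subgame perfect strategy, then every such game admits a positional winning strategy, which is exactly the definition of $F$ being positional.

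For the ``only if'' direction, I would assume $F$ is positional and fix an arbitrary game $G$ with goal $F$. Since $G$ is finite, there are only finitely many positional strategies in $G$; call them $\straa_1, \ldots, \straa_k$. I argue by contradiction: suppose none of them is subgame perfect. Then for each $i$ there is a winning path $\rho_i$ in $G$ from which $\straa_i$ is not winning. The idea is to force a single positional strategy to cope with all the $\rho_i$ at once, using the detach operation. I form the finite game $G^*$ obtained from $G$ by detaching the finitely many paths $\rho_1, \ldots, \rho_k$, using pairwise disjoint fresh states for the chains. Because every detachment only adds states unreachable from the old ones, Lemma~\ref{lem-detach1} ensures that each $\rho_i$ is still a winning path in $G^*$, and then Lemma~\ref{lem-detach2} gives that the start state ${s_0'}^{(i)}$ of the detached copy of $\rho_i$ is a winning state in $G^*$.

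Next I would invoke the positionality of $F$ on the finite game $G^*$ to obtain a positional winning strategy $\straa^*$. Restricted to the old states of $G$, $\straa^*$ is a positional strategy of $G$, hence it coincides with some $\straa_j$. The decisive step is to transfer the winning property of $\straa^*$ from the detached start states back to the original paths: starting at ${s_0'}^{(i)}$, every play is forced through the one-successor detached chain, so its color sequence is $\col(\rho_i)$ followed by a continuation from $\last(\rho_i)$ chosen by $\straa^*$, which (being positional on old states) is exactly a continuation produced by $\straa_j$ in $G$. This is the same color-equivalence argument used in the proof of Lemma~\ref{lem-detach2}, read in the opposite direction. Consequently, $\straa^*$ being winning from ${s_0'}^{(i)}$ means precisely that $\straa_j$ is winning from $\rho_i$ in $G$, for every $i$, and in particular for $i = j$. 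But $\rho_j$ was chosen so that $\straa_j$ is not winning from it, a contradiction. Therefore some $\straa_i$ is subgame perfect, and it is positional.

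I expect the main obstacle to be exactly this ``combination'' issue: positionality of $F$ hands us, for each individual winning path in isolation, some positional strategy winning from it, and a priori these strategies could all differ, whereas subgame perfection requires one strategy winning from all winning paths simultaneously. The detach construction is what overcomes this, by packaging all the troublesome paths into a single finite game on which positionality yields one winning strategy that must handle every detached path at once; the finiteness of the set of positional strategies is what keeps both the witness set and the game $G^*$ finite. A secondary point requiring care is the routine generalization of Lemmas~\ref{lem-detach1} and~\ref{lem-detach2} from a single detachment to several simultaneous ones, which holds because the added chains are mutually unreachable.
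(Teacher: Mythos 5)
Your proof is correct, and it reuses the paper's essential machinery --- the $\detach$ operation, Lemmas~\ref{lem-detach1} and~\ref{lem-detach2}, and the color-equivalence argument that transfers ``$\straa^*$ wins from the detached start state'' back to ``$\straa_j$ wins from $\rho_i$'' (that last step is verbatim the closing step of the paper's proof) --- but it organizes the combinatorics quite differently. The paper enumerates \emph{all} (countably many) winning paths of $G$, builds an infinite sequence of games $G_{i+1}=\detach(G_i,\rho_i)$ together with a sequence of positional winning strategies, and proves that the strategy sequence stabilizes by showing that a strategy, once discarded, can never reappear (this is where finiteness of the set of positional strategies enters). You instead argue by contradiction with exactly one witness path per positional strategy, so only finitely many detachments are ever performed, positionality of $F$ is invoked once on a single finite game $G^*$, and the pigeonhole observation that $\straa^*$ restricted to the old states must equal some $\straa_j$ replaces the entire convergence argument. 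Your version is shorter and avoids both the infinite tower of games and the stabilization lemma; finiteness of the set of positional strategies is still the engine, but it bounds the number of witnesses up front rather than the number of strategy changes. The one point that genuinely needs care --- and you correctly flag it --- is extending Lemmas~\ref{lem-detach1} and~\ref{lem-detach2} to several detachments, in particular checking that the start state of an earlier detached chain remains a winning state after later detachments; this follows by applying Lemma~\ref{lem-detach1} to the length-one path consisting of that state, since all added chains are mutually unreachable.
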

\begin{proof}
The ``if'' part being obvious by definition, assume that the goal is positional.
Let $G$ be a game and let $W$ be the set of winning paths of $G$.
$W$ may be infinite but it is certainly countable.
Consider any ordering of $W$ into $\rho_0, \rho_1, \ldots$.
Consider the sequence of games $(G_i)_{i \geq 0}$ defined
by $G_0 = G$ and $G_{i+1} = \detach(G_i, \rho_i)$.
Additionally, consider the sequence of strategies $(\straa_i)_{i \geq 0}$
defined by: $\straa_0$ is any positional winning strategy in $G_0$, and
$$ \straa_{i+1} = \begin{cases}
   \straa_i &\text{if $\straa_i$ is winning in $G_{i+1}$,} \\
   \text{any positional winning strategy in $G_{i+1}$} &\text{otherwise}.
\end{cases}
$$
We prove that the sequence $(\straa_i)_{i \geq 0}$ converges to a strategy
$\straa^*$ within a finite number of steps.
Specifically, we show that once a strategy $\straa$ occurs in the sequence
and then is replaced by another one, it will not occur 
at any later point in the sequence. This fact, together with the fact that
the number of positional strategies is finite, leads to the convergence
of the sequence.
Assume by contradiction that there exist indices $0 \leq a < b$ such that
$\straa_a \neq \straa_{a+1} \neq \straa_{b}$ and $\straa_a = \straa_b$.
Since $\straa_a \neq \straa_{a+1}$, $\straa_a$ is not winning in $G_{a+1}$.
Therefore, by repeated application of Lemma~\ref{lem-detach1}, $\straa_a$
(or equivalently, $\straa_b$)
cannot be winning in $G_b$, which is a contradiction.

Next, we prove that $\straa^*$ is subgame perfect in $G$.
Let $\rho = s_0 \ldots s_n$ be a winning path in $G$.
There is an index $j>0$ such that $G_j = \detach(G_{j-1}, \rho)$.
Let $s_0'$ be the ``copy'' of $s_0$ added to $G_j$ by the detach operation.
Since $\rho$ is a winning path in $G$, by Lemma~\ref{lem-detach1}
it is still winning in $G_{j-1}$, and by Lemma~\ref{lem-detach2}
$s_0'$ is a winning state in $G_j$.
Therefore, $\straa_j$ is winning from $s_0'$.
Since $\straa^*$ is the ultimate value for the sequence of strategies,
$\straa^*$ is also winning from $s_0'$.
Take an infinite path $\rho'$ in $G$,
which extends $\rho$ and that after $\rho$ is consistent with $\straa^*$.
If we replace in $\rho'$ the prefix $\rho$ with $s_0' \ldots s_{n-1}' s_n$, 
we obtain an infinite path $\rho''$ in $G_j$,
which is consistent with $\straa^*$ (because $\straa^*$ is positional)
and color-equivalent to $\rho'$.
Since $\straa^*$ is winning from $s_0'$, $\col(\rho'') \in F$ 
and therefore $\col(\rho') \in F$, which concludes the proof.
\qed
\end{proof}
Notice, however, that there are games with non-positional goals
that have a positional winning strategy but no positional subgame perfect strategy.
An immediate consequence of the above theorem is the following.

\begin{corollary}
A goal is positional if and only if it admits strongly winning strategies.
\end{corollary}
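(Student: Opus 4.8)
The plan is to derive this corollary from the preceding theorem by a simple sandwiching argument. I read the statement in the sense parallel to the definition of a positional goal: the intended claim is that $F$ is positional if and only if every game with goal $F$ admits a \emph{positional} strongly winning strategy. The key observation is that positional strongly winning strategies sit between positional subgame perfect strategies and positional winning strategies, and the two outer classes are already known to coincide, as far as their mere existence across all games with goal $F$ is concerned: the definition of a positional goal handles positional winning strategies, while the preceding theorem handles positional subgame perfect strategies.

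For the ``only if'' direction, I would assume that $F$ is positional. The preceding theorem then yields, for every game $G$ with goal $F$, a positional subgame perfect strategy. By Lemma~\ref{lem-props}(\ref{item-b}), every subgame perfect strategy is strongly winning, so this same positional strategy is strongly winning in $G$. Hence $F$ admits positional strongly winning strategies in every game with goal $F$.

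For the ``if'' direction, I would assume that every game $G$ with goal $F$ has a positional strongly winning strategy $\straa$. By Lemma~\ref{lem-props}(\ref{item-a}), $\straa$ is in particular winning, and it is positional by hypothesis, so $G$ has a positional winning strategy. Since $G$ was an arbitrary game with goal $F$, the goal $F$ is positional by definition.

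There is essentially no hard step here: the whole content is packaged in the chain of containments from Lemma~\ref{lem-props}---every subgame perfect strategy is strongly winning, and every strongly winning strategy is winning---combined with the equivalence between positionality and the existence of positional subgame perfect strategies from the theorem above. The only point requiring a little care is to keep the quantifier ``for all games with goal $F$'' aligned with the definition of a positional goal, and to apply each containment game-by-game to the \emph{same} positional strategy, so that positionality is preserved throughout and no per-game choice of strategy is lost.
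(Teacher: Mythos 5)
Your proof is correct and matches the paper's intent: the paper states this corollary as an ``immediate consequence'' of the preceding theorem, and the sandwiching argument via Lemma~\ref{lem-props} (subgame perfect $\Rightarrow$ strongly winning $\Rightarrow$ winning, applied to the same positional strategy in each game) is exactly the omitted reasoning. Your reading of the statement as ``admits \emph{positional} strongly winning strategies'' is also the right one, since otherwise the claim would be trivially false by Lemma~\ref{lem-props}(\ref{item-c}).
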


\subsection{Positional Admissible Strategies}

Since all admissible strategies are winning, admissible
strategies require at least as much memory as winning strategies.
The following example shows that there are goals which admit
positional winning strategies, but do not admit positional
admissible strategies.

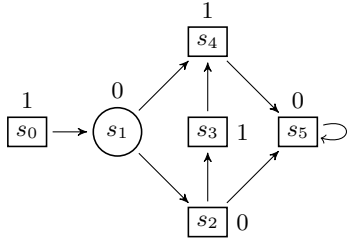
\begin{figure}[htp]
\begin{center}
    \begin{tikzpicture}[remember picture, node distance=1.2cm, auto, bend angle=20, shorten >=2pt, shorten <=2pt]
%      \useasboundingbox (0,-1.5) rectangle (-1,1);
      %\useasboundingbox (0,-1.5) rectangle (-1.5,1.5);
      \node[p2] (s0) [label={above:$1$}]              {$s_0$};
      \node[p1] (s1) [label=above:$0$,        right of=s0] {$s_1$};
      \node[p2] (s3) [label={right:$1$}, right of=s1] {$s_3$};
      \node[p2] (s2) [label=right:$0$,        below of=s3] {$s_2$};
      \node[p2] (s4) [label={above:$1$}, above of=s3] {$s_4$};
      \node[p2] (s5) [label=above:$0$,        right of=s3] {$s_5$};

      \path[-stealth'] (s0) edge (s1)
                       (s1) edge (s2)
                            edge (s4)
                       (s2) edge (s3)
                            edge (s5)
                       (s3) edge (s4)
		       (s4) edge (s5)
                       (s5) edge [loop right] ();
  \end{tikzpicture} 
  \end{center}
   \caption{Positional goals need not admit positional admissible strategies.}
  \label{fig-mem1}
\end{figure}

\begin{example}
Consider the goal which requires to visit at least twice color $1$
(in $\ltl$ notation, $\phi = \Diamond ( 1 \wedge \Next \Diamond 1)$).
Such goal is positional, as it satisfies the criterion of~\cite{gimbert05}.

Consider the game in Figure~\ref{fig-mem1}, with the above goal $\phi$.
In the figure, the color of each state appears next to it or above it.
One can easily check that states $s_0$ and $s_3$ are winning, while
all others are losing.
The only choice for Player~1 occurs in $s_1$, where he can choose
between $s_2$ and $s_4$. If the game starts in $s_0$, in order to ensure
victory, Player~1 must choose $s_4$ after $s_1$.
On the other hand, suppose that the game starts in $s_1$.
State $s_1$ is cooperatively winning, since both players can cooperate
and achieve victory by following the path $s_1 s_2 s_3 s_4 s_5^\omega$.
By Theorem~\ref{thm-undominated} for a strategy to be admissible
it has to be both winning and cooperatively winning. 
Thus, Player~1 must choose $s_2$ after $s_1$, if the game started in $s_1$ itself.
In conclusion, the choice of any admissible strategy from $s_1$
depends on the past history of the game.
\end{example}

In the following, we give a sufficient condition to ensure
that a goal admits positional admissible strategies.
We start with a simple algorithm, which we later show
yields positional admissible strategies for many goals of interest.

\subsubsection{Computing positional admissible strategies.}

Suppose that we are given a game $G$ with a positional goal $F$,
and that we have an algorithm for computing the set of winning states
and a positional winning strategy for all games with goal $F$.
Consider the following procedure.

\begin{enumerate}
\item Compute the set of winning states $\win$ and a positional winning strategy $\straa$ for $G$.
\item Remove from $G$ the edges of Player~1 
      which start in $\win$ and do not belong to $\straa$.
\item In the resulting game, 
      compute and return a positional cooperatively winning strategy.
\end{enumerate}

In general, this procedure may return a strategy
that is neither winning nor cooperatively winning.
However, in the following we show that indeed it returns
a strategy that is both winning and cooperatively winning
in many cases of interest.

As far as the complexity of the procedure is concerned, 
assuming the usual graph-like adjacency-list representation for games,
we obtain the same asymptotical complexity
as finding a positional winning strategy for $F$.
In particular, step 3 can easily be performed by attributing all states to Player~1
and then running the algorithm for a positional winning strategy.

In the following, we prove that for prefix-independent positional goals
the above procedure returns an admissible strategy.

\begin{theorem}
If a goal is positional and prefix-independent, 
then it admits positional admissible strategies.
\end{theorem}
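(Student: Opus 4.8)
The plan is to show that the three-step procedure stated just above returns a positional admissible strategy whenever $F$ is positional and prefix-independent. By Theorem~\ref{thm-undominated}, it suffices to prove that the returned strategy is both strongly winning and cs-winning; and since $F$ is prefix-independent, Theorem~\ref{thm-pindependent} reduces these two obligations to showing that the returned strategy is, respectively, a positional winning strategy (winning from every winning state of $G$) and a positional c-winning strategy (cooperatively winning from every cooperatively winning state of $G$). So the whole argument comes down to establishing these two facts for the output of the procedure. To fix notation, let $\straa$ and $\win$ be the positional winning strategy and winning region from step~1, let $G'$ be the game obtained in step~2 by deleting the Player~1 edges that start in $\win$ and differ from $\straa$, and let $\straa^*$ be the positional c-winning strategy returned in step~3. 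I would first observe that $G'$ is still non-blocking (the edge chosen by $\straa$ survives at every $w \in \win \cap S_1$) and that in $G'$ every winning Player~1 state retains exactly one outgoing Player~1 edge, so every strategy of $G'$, in particular $\straa^*$, must agree with $\straa$ on $\win \cap S_1$.

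For the winning part, I would take an arbitrary winning state $w$ of $G$ and any play from $w$ consistent with $\straa^*$. On winning Player~1 states $\straa^*$ coincides with $\straa$, while from a winning Player~2 state every successor is winning; hence, using Corollary~\ref{lem-trapping} (applicable since prefix-independence implies shrinkability), the play never leaves $\win$ and on Player~1 states follows $\straa$. Such a play is therefore also consistent with $\straa$ and so satisfies $F$. This shows that $\straa^*$ is winning from $w$, i.e. that $\straa^*$ is a positional winning strategy of $G$.

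The delicate part is the c-winning claim, because deleting Player~1 edges could a priori destroy cooperative witnesses. I would argue that every cooperatively winning state of $G$ is still cooperatively winning in $G'$. Fix such a state $s$ together with a cooperative path $\pi$ from $s$ in $G$ satisfying $\col(\pi) \in F$. If $\pi$ never visits $\win$, then none of the edges it uses were deleted and $\pi$ survives in $G'$. Otherwise, let $t$ be the first winning state on $\pi$; the prefix of $\pi$ up to $t$ stays in losing states, whose Player~1 edges are untouched, so it survives in $G'$. From $t$, following $\straa$ yields a play that is winning in $G'$, hence satisfies $F$; splicing this $\straa$-continuation after the surviving prefix gives a path in $G'$ from $s$ whose suffix from $t$ lies in $F$, so by prefix-independence (shrinkability and extensibility together) the whole spliced path lies in $F$. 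Thus $s$ is cooperatively winning in $G'$. Since $\straa^*$ is c-winning in $G'$, it is cooperatively winning from $s$ in $G'$; and because the edges of $G'$ are a subset of those of $G$, the corresponding cooperative play is also a valid play in $G$ consistent with $\straa^*$, so $\straa^*$ is cooperatively winning from $s$ in $G$. Hence $\straa^*$ is a positional c-winning strategy of $G$.

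Combining the two facts with Theorem~\ref{thm-pindependent} makes $\straa^*$ strongly winning and cs-winning, and then Theorem~\ref{thm-undominated} makes it admissible; being positional, it witnesses the claim. The main obstacle is precisely the c-winning step: one must show that pruning Player~1 moves inside the winning region does not annihilate cooperative witnesses, which is where the trapping property and prefix-independence are essential, allowing one to reroute any cooperative path through the retained strategy $\straa$ as soon as it enters $\win$.
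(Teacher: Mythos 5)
Your proposal is correct and follows essentially the same route as the paper: reduce admissibility to ``winning $+$ c-winning'' via Theorems~\ref{thm-undominated} and~\ref{thm-pindependent}, establish the winning part through Corollary~\ref{lem-trapping} and the fact that the output strategy agrees with $\straa$ on $\win$, and establish the c-winning part by showing that every cooperatively winning state of $G$ remains cooperatively winning in the pruned game, using exactly the same two-case splicing argument (reroute through $\straa$ at the first entry into $\win$ and invoke extensibility). The only cosmetic difference is that you spell out a few steps the paper leaves implicit, such as why the pruned game is non-blocking.
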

\begin{proof}
Let $G$ be a game with goal $F$, where $F$ is positional and prefix-independent.
Let $\win$ be the set of winning state of $G$,
and consider the application of the above procedure to $G$.
Let $\straa_1$ be the strategy computed at step~1
of the procedure, and $\straa_3$ the output strategy.
Clearly, $\straa_3$ is positional.
We prove that it is also admissible.
According to Theorems~\ref{thm-pindependent} and~\ref{thm-undominated}, 
it is sufficient to prove that $\straa_3$ is winning and cooperatively winning.

By Corollary~\ref{lem-trapping}, whenever the game starts in a winning state,
$\straa_1$ confines the game in $\win$.
Since $\straa_3$ coincides with $\straa_1$ on $\win$,
$\straa_3$ is winning.

Next, let $G_2$ be the game built at Step~2 of the procedure.
We know that $\straa_3$ is cooperatively winning in $G_2$.
We now prove that $\straa_3$ is cooperatively winning in $G$.
Let $s$ be a cooperatively winning state in $G$.
We prove that $s$ is cooperatively winning in $G_2$ as well.
Let $\straa,\strab$ be a pair of strategies in $G$ 
such that $C(\outc(s,\straa,\strab)) \in F$.
We consider the following two cases:
\emph{(1)} the infinite path $\outc(s,\straa,\strab)$ does not
visit $\win$; then, strategies $\straa$ and $\strab$ are valid in $G_2$
and thus $s$ is cooperatively winning in $G_2$;
\emph{(2)} let $\rho$ be the shortest prefix of $\outc(s,\straa,\strab)$ which ends in $\win$;
then, consider a new strategy $\straa'$ which behaves as follows, for all $\pi\in S^*$:
$$
\straa'(\pi)=
\begin{cases}
\straa_1(\pi) &\text{if $\pi$ extends $\rho$,} \\
\straa(\pi)   &\text{otherwise.}
\end{cases}
$$
Consider strategies $\straa'$ and $\strab$ playing together in $G_2$.
At first, $\straa'$ behaves like $\straa$ and so the path $\rho$ is built.
Then, $\straa'$ behaves like the winning strategy $\straa_1$.
Eventually, the infinite path $\rho \rho'$ is built,
where $C( \last(\rho) \rho') \in F$.
Since $F$ is extensible, $C(\rho \rho')\in F$ and 
thus $s$ is cooperatively winning in $G_2$.
\qed
\end{proof}
This result proves that common goals such as B\"uchi, co-B\"uchi, and parity
all admit positional admissible strategies.
However, prefix-independence is not necessary 
for admitting positional admissible strategy.
Further, it is not necessary for the procedure to work either.
In particular, it is easy to prove that 
the procedure also returns an admissible strategy for reachability and safety goals.

\section{Conclusions and Future Work}

We studied and compared several criteria for establishing what are 
the preferable strategies for Player~1 in a given game.
Being winning, or rather strongly winning, intuitively 
seems to be a necessary prerequisite for any best-effort strategy.
Accordingly, optimality against a random opponent should be discarded
as it does not imply winning. 
However, for some applications it could be worth considering
a strategy which is primarily strongly winning, 
and optimal from paths that are not winning.
Admissibility, on the other hand, passes all our tests,
including the one of being efficiently computable for goals of interest.
On the other hand, in general admissibility incurs some memory cost.

This preliminary study leaves several open problems, including:
characterizing goals having positional admissible strategies,
finding algorithms for computing strongly winning and admissible strategies
for goals that are not prefix-independent. 

Aside from the goal-independent best-effort criteria studied in this paper,
it is worth noticing that goal-specific criteria offer an entirely
different range of possibilities, many of which remain unexplored 
in the literature.

\subsubsection{Acknowledgments.}
The author would like to thank Hugo Gimbert for a fruitful conversation
on the subject.

\newpage
\section*{Appendix: Additional Examples}

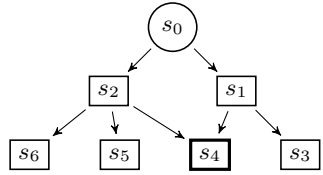
\begin{figure}[h]
\centering
    \begin{tikzpicture}[node distance=1.2cm, auto, bend angle=20, shorten >=2pt, shorten <=2pt]
      \node[p1] (s0) {$s_0$};
      \node[p2] (s1) [below right of=s0] {$s_1$};
      \node[p2] (s2) [below left of=s0] {$s_2$};
      \node[p2] (s3) [below right of=s1] {$s_3$};
      \node[p2,very thick] (s4) [left of=s3] {$s_4$};
      \node[p2] (s5) [left of=s4] {$s_5$};
      \node[p2] (s6) [left of=s5] {$s_6$};

      \path[-stealth'] (s0) edge (s1)
                            edge (s2)
                       (s1) edge (s3)
                            edge (s4)
                       (s2) edge (s4)
                            edge (s5)
                            edge (s6);
  \end{tikzpicture}
  \caption{Admissible strategies need not be optimal. The goal is to reach $s_4$. 
  All strategies of Player~1 are winning, strongly winning and admissible.
  On the other hand, only the strategy that chooses $s_1$ is optimal,
  leading to a probability of winning of $\frac{1}{2}$.}
  \label{fig-optimal1}
\end{figure}
\begin{figure}[h]
\centering
\begin{tikzpicture}[node distance=1.7cm, auto, shorten >=2pt, shorten <=2pt, bend angle=20]
      \node[p2] (s0) [label=below:$1$]              {$s_0$};
      \node[p1] (s1) [label=below:$2$, right of=s0] {$s_1$};

      \path[-stealth'] (s0) edge [bend right] (s1)
                            edge [loop above] ()
                       (s1) edge [thick,bend right] (s0)
                            edge [loop above] ();
\end{tikzpicture} 
\caption{
A game having a winning strategy which is not strongly winning,
and a strongly winning strategy which is not subgame perfect.
The goal is $\phi = 1 \wedge \Diamond \Box 2$.
The positional strategy consisting of going from $s_1$ to $s_0$ 
(the thick edge) is winning but not strongly winning.
The strategy that chooses $s_1$ when the current history contains
exactly one occurrence of $s_0$, and chooses $s_0$ otherwise
is strongly winning but not subgame perfect, due for instance
to the winning path $s_0 s_1 s_0 s_1$.}
\label{fig-win-not-strongly}
\end{figure}
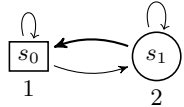
\begin{figure}[h]
\centering
\begin{tikzpicture}[node distance=1.7cm, auto, shorten >=2pt, shorten <=2pt, bend angle=20]
      \node[p1] (s0) [label=above:$1$]              {$s_0$};
      \node[p2] (s1) [label=above:$0$, right of=s0] {$s_1$};
      \node[p1] (s2) [label=above:$0$, right of=s1] {$s_2$};
      \node[p2] (s3) [label=below:$2$, node distance=1cm, below of=s1] {$s_3$};

      \path[-stealth'] (s0) edge              (s1)
                            edge [bend angle=45, bend left]  (s2)
                       (s1) edge [bend right] (s2)
                            edge              (s3)
                       (s2) edge [bend right] (s1)
                            edge [loop right] ();
\end{tikzpicture} 
\caption{
A game having a c-winning strategy which is not cs-winning,
and a cs-winning strategy which is not c-perfect.
The goal is $\phi = 1 \wedge \Diamond \Box 2$.
The only c-winning state is $s_0$.
The positional strategy $s_0 \rightarrow s_1$, $s_2 \rightarrow s_2$
is c-winning but not cs-winning. 
Consider the strategy that chooses $s_0 \rightarrow s_1$ and then
\emph{(i)} $s_2 \rightarrow s_1$ if $s_1$ was visited in the current history
and \emph{(ii)} $s_2 \rightarrow s_2$ if $s_1$ was not visited in the current 
histoy. Such strategy is cs-winning but not c-perfect,
due to the c-winning path $s_0 s_2$.}
\label{fig-cwin-not-strongly}
\end{figure}
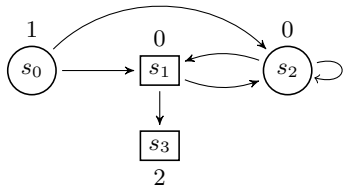
\begin{figure}[h]
\centering
\begin{tikzpicture}[node distance=1.5cm, auto, shorten >=2pt, shorten <=2pt, bend angle=20]
      \node[p1] (s0) [label=above:$0$]              {$s_0$};
      \node[p2] (s1) [label=above:$1$, above right of=s0] {$s_1$};
      \node[p2] (s2) [label=above:$2$, below right of=s0] {$s_2$};
      \node[p1] (s3) [label=above:$0$, below right of=s1] {$s_3$};
      \node[p2] (s4) [label=above:$3$, above right of=s3] {$s_4$};
      \node[p2] (s5) [label=above:$4$, below right of=s3] {$s_5$};

      \path[-stealth'] (s0) edge (s1)
                            edge (s2)
                       (s1) edge (s3)
                       (s2) edge (s3)
                       (s3) edge (s4)
                            edge (s5);
\end{tikzpicture} 
\caption{A game having a positional strongly winning strategy, but no
positional subgame perfect strategy. The goal is 
$\phi = 0 \wedge \bigl( (\Diamond 1 \wedge \Diamond 3) \vee (\Diamond 2 \wedge \Diamond 4) \bigr)$. 
The positional strategy $s_0 \rightarrow s_1$, $s_3 \rightarrow s_4$
is strongly winning.
All subgame perfect strategies need memory in state $s_3$.}
\label{fig-subgame-needs-mem}
\end{figure}
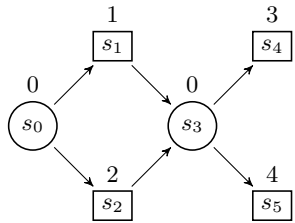

\end{document}